\documentclass[12pt]{amsart}

\usepackage{amssymb,amsthm,amsmath}
\usepackage[numbers,sort&compress]{natbib}
\usepackage{color}
\usepackage{graphicx}
\usepackage{amsthm}
\usepackage{tikz}
\usepackage{amssymb,amsthm,amsmath}
\usepackage[numbers,sort&compress]{natbib}
\usepackage{color}
\usepackage{graphicx}
\usepackage{tikz}
\usepackage{xparse}

%\linespread{2}

%\hoffset -3.5pc

%\renewcommand{\textwidth}{34pc}

\title[Fermi isospectrality of separable periodic   operators ]{ Fermi isospectrality of discrete periodic Schr\"odinger operators  with separable potentials on $\mathbb{Z}^2$}
%\thanks{Project supported by the National Natural Science Foundation of .}

\author{Wencai Liu}
\address[W. Liu]{ Department of Mathematics, Texas A\&M University, College Station, TX 77843-3368, USA} \email{liuwencai1226@gmail.com; wencail@tamu.edu}

\keywords{  Fermi variety,   irreducibility,  Rouche's theorem,  Fermi isospectrality,  Floquet isospectrality,  periodic Schr\"odinger  operator, separable potentials.}

%\subjclass[2020]{  14H10 (primary); 47A75, 35J10 (secondary)} 
\thanks{{\em 2020 Mathematics Subject Classification.} Primary: 58J53. Secondary: 35J10, 47A75.}

\theoremstyle{plain}
\newtheorem{theorem}{Theorem}[section]
\newcommand{\R}{\mathbb{R}}

\newtheorem{lemma}[theorem]{Lemma}

\newtheorem{remark}{Remark}
\newcommand{\C}{\mathbb{C}}

\newcommand{\Z}{\mathbb{Z}}

\theoremstyle{plain}
\newtheorem{definition}{Definition}
\newtheorem{conjecture}{Conjecture}

\begin{document}
	
	% \newcommand{\N}{\mathbb{N}}
	%%% ----------------------------------------------------------------------
	
	\begin{abstract}
		Let $\Gamma=q_1\mathbb{Z}\oplus q_2 \mathbb{Z} $ with $q_1\in \mathbb{Z}_+$ and $q_2\in\mathbb{Z}_+$.
		Let $\Delta+X$ be the discrete periodic  Schr\"odinger operator on $\mathbb{Z}^2$,
		where $\Delta$ is the discrete Laplacian and $X:\mathbb{Z}^2\to \C$ is $\Gamma$-periodic. In this paper, we develop tools from complex analysis to study the isospectrality of discrete periodic  Schr\"odinger operators. We prove that  if  two $\Gamma$-periodic potentials  $X$ and $Y$ are Fermi isospectral  and both  $X=X_1\oplus X_2$ and $Y=  Y_1\oplus Y_2$ are separable   functions,  then, up to a constant, one dimensional potentials  $X_j$  and $Y_j$ are Floquet isospectral, $j=1,2$. This allows us to prove that for any  non-constant  separable real-valued $\Gamma$-periodic potential,  the  Fermi variety $F_{\lambda}(V)/\mathbb{Z}^2$  is irreducible for any $\lambda\in \mathbb{C}$, which partially confirms a  conjecture  of Gieseker, Kn\"{o}rrer  and Trubowitz in the early 1990s.

	\end{abstract}
	
	%%% ----------------------------------------------------------------------
	\maketitle 
	%%%%%%%%%%%%%%%%%%%%%%%%%%%%%%%%%%%%%%%%%%%%%%%%%%%%%%%%%%%%%%%%%%%%%%%%%%
	%INTRODUCTION%%%%%%%%%%%%%%%%%%%%%%%%%%%%%%%%%%%%%%%%%%%%%%%%%%%%%%%%%%%%%
	%%%%%%%%%%%%%%%%%%%%%%%%%%%%%%%%%%%%%%%%%%%%%%%%%%%%%%%%%%%%%%%%%%%%%%%%%%
	\section{Introduction and main results}

	Given $q_j\in \Z_+$, $j=1,2,\cdots,d$,
	let $\Gamma=q_1\Z\oplus q_2 \Z\oplus\cdots\oplus q_d\Z $.
	We say that a function $V: \Z^d\to \C$ is  $\Gamma$-periodic (or just periodic)  if 
	for any $\gamma\in \Gamma$ and $n\in\Z^d$,  $V(n+\gamma)=V(n)$. 
	For $n=(n_1,n_2,\cdots,n_d)\in\Z^d$, denote by 	$||n||_1=\sum_{j=1}^d |n_j|$.
	Let  $\Delta$ be the discrete Laplacian on lattices  $\Z^d$, namely
	\begin{equation*}
	(\Delta u)(n)=\sum_{n'\in\Z^d, ||n^\prime-n||_1=1}u(n^\prime).
	\end{equation*}

		In the following, we  always assume that $q_j$, $j=1,2,\cdots, d$, are pairwise coprime  and $V$ is $\Gamma$-periodic.

	In this article we are interested in the isospectrality problem and irreducibility of Fermi varieties  of discrete periodic Schr\"odinger operators $\Delta+V$.  We refer readers to two survey articles ~\cite{ksurvey,lsur} for background and recent developments about the two topics.

	Let $\{\textbf{e}_j\}$, $j=1,2,\cdots d$,  be the   standard basis in $\Z^d$:
	\begin{equation*}
	\textbf{e}_1 =(1,0,\cdots,0),\textbf{e}_2 =(0,1,0,\cdots,0),\cdots, \textbf{e}_{d}=(0,0,\cdots,0,1).
	\end{equation*}
	
	\begin{definition}
		The   { \it Bloch variety} $B(V)$ of  $\Delta+V$ consists of all pairs $(k,\lambda )\in \C^{d+1}$ for which
		there exists a non-zero solution of the equation 
		\begin{equation} 
		(\Delta u)(n)+V(n) u(n)=\lambda u(n) ,n\in\Z^d,  \label{spect_0}
		\end{equation}
		satisfying the so called Floquet-Bloch  boundary condition
		\begin{equation} 
		u(n+q_j\textbf{e}_j)=e^{2\pi i k_j}u(n),j=1,2,\cdots,d, \text{ and }  n\in \Z^d, \label{Fl}
		\end{equation}
		where $k=(k_1,k_2,\cdots,k_d)\in \C^d$.

		Given  $\lambda\in \C$,
		the  Fermi surface (variety) $F_{\lambda}(V)$ is defined as the level set of the  Bloch variety:
		\begin{equation*}
		F_{\lambda}(V)=\{k: (k,\lambda)\in B(V)\}.
		\end{equation*}
	\end{definition}

We call $k=(k_1,k_2,\cdots,k_d)$ that appears in \eqref{Fl} quasi-momentum.
	One can see that both Fermi and Bloch varieties are analytic sets, in fact  algebraic sets after changing variables~\cite{ksurvey,lsur,liu1}.

		Our first interest is the isospectrality problems.
		
	Let $D_{V} (k)$ be the periodic operator $\Delta+V$ with the Floquet-Bloch boundary condition \eqref{Fl} (see Section \ref{S2} for the precise description of $D_{V} (k)$).
	Two $\Gamma$-periodic potentials $X$ and $Y$ are called 
	Floquet isospectral if  
	\begin{equation}\label{gfi}
	\sigma(D_{X} (k))= \sigma(D_{Y} (k)), \text{ for any } k \in\R^d.
	\end{equation}
	Two $\Gamma$-periodic potentials $X$ and $Y$ are called 
isospectral if  
	\begin{equation}\label{gfinew}
	\sigma(D_{X} (k))= \sigma(D_{Y} (k)) \text{  with  } k =0.
	\end{equation}
	
	Understanding when two periodic potentials $X$ and $Y$ are Floquet isospectral or isospectral is a fascinating subject and has been extensively studied
	~\cite{ksurvey,ERT84,MT76,kapiii,Kapi,Kapii,ERTII,gki,wa,gkii,gui90,eskin89}.

	In \cite{liu2021fermi}, the author introduced a new type of isospectrality: Fermi isospectrality. 
	\begin{definition}\label{fermiiso}	\cite{liu2021fermi}
		Let $X$ and $Y$ be two $\Gamma$-periodic functions. We say $X$ and $Y$ are  Fermi isospectral if 
		${F}_{\lambda_0} (X)={F}_{\lambda_0} (Y)$ for some $\lambda_0\in\C$. 
	\end{definition}
	It is not difficult to see that  two periodic  functions $X$ and $Y$ are Floquet isospectral if and only if
	Bloch varieties of $X$ and $Y$ are the same (or Fermi varieties of $X$ and $Y$ are the same for every $\lambda\in\C$)~\cite{liu2021fermi}.  So Fermi isospectrality is a ``hyperplane" version of Floquet isospectrality.

	In \cite{liu2021fermi}, the author proved several  rigidity theorems  of discrete periodic Schr\"odinger operators about separable functions.
	We say that a function $V$ on $\Z^d$ is $(d_1,d_2,\cdots,d_r)$ separable (or simply separable, denote it by $V=\bigoplus_{j=1}^rV_j$), where  $\sum_{j=1}^r d_j= d$ with $r\geq 2$, if there exist functions 
	$V_j$  on $\Z^{d_j}$, $j=1,2,\cdots,r$,   
	such that  for any $(n_1,n_2,\cdots,n_d)\in\Z^d$,
	\begin{align}
	V(n_1,n_2,\cdots,n_d)=&V_1(n_1,\cdots, n_{d_1})+V_2(n_{d_1+1},n_{d_1+2},\cdots,n_{d_1+d_2})\nonumber\\
	&+\cdots+V_r(n_{d_1+d_2+\cdots +d_{r-1}+1},\cdots,n_{d_1+d_2+\cdots +d_r}).\label{g61}
	\end{align}
	One of  rigidity theorems in \cite{liu2021fermi} states
	\begin{theorem}\label{thm}
		\cite{liu2021fermi}
		Let $d\geq 3$. 
		Assume   that  two separable  $\Gamma$-periodic potentials  $X=\bigoplus_{j=1}^rX_j$   and    $Y=\bigoplus_{j=1}^rY_j$ are Fermi isospectral.  
		Then, up to a constant,  lower dimensional decompositions  $V_j$  and $Y_j$ are Floquet isospectral, $j=1,2,\cdots,r$.
	\end{theorem}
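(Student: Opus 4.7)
The plan hinges on the tensor-sum structure of a separable potential. Writing the quasi-momentum as $k=(k_1,\ldots,k_r)$ with $k_j\in\C^{d_j}$, the Floquet--Bloch operator $D_V(k)$ associated to $V=\bigoplus_{j=1}^r V_j$ splits as a tensor sum, so its spectrum is $\sigma(D_V(k))=\{\alpha_1+\cdots+\alpha_r:\alpha_j\in\sigma(D_{V_j}(k_j))\}$ and the characteristic polynomial factors as
\begin{equation*}
P_V(k,\lambda):=\det(D_V(k)-\lambda I)=\prod_{(i_1,\ldots,i_r)}\bigl(\lambda-\alpha_{i_1}^{(1)}(k_1)-\cdots-\alpha_{i_r}^{(r)}(k_r)\bigr).
\end{equation*}
After the substitution $z_j^{(\ell)}=e^{2\pi i k_j^{(\ell)}}$, this becomes a Laurent polynomial in $z$ whose Newton polytope and leading corner coefficients depend only on $\Gamma$, not on $V$.

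The first step is to upgrade the set-theoretic equality $F_{\lambda_0}(X)=F_{\lambda_0}(Y)$ to the polynomial identity $P_X(k,\lambda_0)=P_Y(k,\lambda_0)$. This follows from the equality of their zero loci together with the coincidence of the $V$-independent leading monomials, via a Nullstellensatz-type argument.

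Next, I would disentangle the factors by specialization. Fix generic $k_2^0,\ldots,k_r^0$ and form the multisets $\mathcal{E}^X=\sigma(D_{X_2}(k_2^0))+\cdots+\sigma(D_{X_r}(k_r^0))$ and $\mathcal{E}^Y$ analogously. The polynomial identity of the previous step specializes to
\begin{equation*}
\prod_{\beta\in\mathcal{E}^X}P_{X_1}(k_1,\lambda_0-\beta)=\prod_{\beta'\in\mathcal{E}^Y}P_{Y_1}(k_1,\lambda_0-\beta').
\end{equation*}
For generic $\beta$ the factor $P_{X_1}(k_1,\lambda_0-\beta)$ is irreducible in the Laurent polynomial ring in $z_1$ (invoking lower-dimensional irreducibility results for Fermi varieties), so unique factorization produces a bijective matching of factors on the two sides. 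Each match yields $P_{X_1}(k_1,\mu)=P_{Y_1}(k_1,\mu+c)$ with $c=\beta'-\beta$, and tracking consistency as $(k_2^0,\ldots,k_r^0)$ ranges over a dense set forces a single constant $c_1$ with $P_{X_1}(k_1,\mu)=P_{Y_1}(k_1,\mu+c_1)$ identically in $\mu$, i.e.\ Floquet isospectrality of $X_1$ and $Y_1+c_1$. Iterating over $j=1,\ldots,r$ yields the conclusion.

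The principal obstacle is the consistent matching of shifts in the multiset identity: one has to exclude cross-cancellations among the factors and verify that the shift $c_j$ does not depend on the choice of spectator momenta. The hypothesis $d\geq 3$ enters precisely here, by ensuring enough spectator directions both to run the density argument for the frozen $k_j^0$'s and to invoke irreducibility results for the lower-dimensional characteristic polynomials that appear as factors. The technical heart of the proof is combining this irreducibility with careful control of the constants under specialization.
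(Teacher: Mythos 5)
First, a framing point: the paper does not prove Theorem \ref{thm} at all --- it is imported from \cite{liu2021fermi}, and the introduction explicitly notes that the method used there needs the extra free variables that $d\geq 3$ provides. The nearest in-paper argument is the proof of the $d=2$ analogue (Theorem \ref{thm1}), and measuring your outline against that machinery is the fairest test. Your first two steps are sound and coincide with the paper's setup: separability does give the product formula for $\mathcal{P}_V$ over the eigenvalue branches of the factors, and the promotion of the set equality $F_{\lambda_0}(X)=F_{\lambda_0}(Y)$ to the polynomial identity $\mathcal{P}_X(z,\lambda_0)=\mathcal{P}_Y(z,\lambda_0)$ is precisely Lemma \ref{lem1} (Lemma 2.3 of \cite{liu2021fermi}), so citing it rather than re-deriving it is fine.

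The gap sits in your steps three and four, and you have located it yourself: the ``consistent matching of shifts'' is not a technical loose end but the entire content of the theorem, and your outline contains no mechanism for it. Two concrete problems. (a) The irreducibility claim for the factors fails in the most common case: if $d_1=1$, then $\mathcal{P}_{X_1}(z_1,\mu)=\hat{\mathcal{P}}_{X_1}(\mu)+(-1)^{q_1+1}(z_1+z_1^{-1})$ splits into two linear factors over $\C[z_1^{\pm 1}]$, so unique factorization in $z_1$ permits exactly the cross-cancellations you need to exclude (and for $d_1=2$ with complex potentials, irreducibility of the lower-dimensional Fermi variety can genuinely fail at one level). The paper's fix is to factor with respect to the variable $w=(-1)^{q_1+1}(z_1+z_1^{-1})$, in which each factor is linear hence irreducible, and then to decide \emph{which} factor on the left matches which on the right not by genericity but by the asymptotics \eqref{g8} of the labelled branches, $\lambda^l(z_2)= e^{2\pi i l/q_2}z_2+[V]+O(z_2^{-1})$: the $l$-th factor must match the $l$-th factor. (b) Even with that matching in hand, each choice of spectator momentum yields only one scalar relation between one $\beta$ and one $\beta'$; nothing in ``tracking consistency over a dense set'' forces $\beta'-\beta$ to be a single constant independent of $l$ and of the spectator point. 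In the $d=2$ proof the missing mechanism is Lemma \ref{lem2}: after normalizing $[X_2]=[Y_2]=0$, Rouch\'e's theorem applied to the difference of the Laurent tails produces, for one fixed small $\eta$, points $x_l$ with $\lambda_{X_2}^l(x_l)=\lambda_{Y_2}^l(x_l)+\eta$ for \emph{every} $l$ simultaneously; that gives $q_2>q_1$ interpolation points with a common shift, which is what pins down the polynomial identity and the single constant. For $d\geq 3$ the original proof replaces this complex-analytic step by exploiting the additional quasi-momentum variables, but either way some such device is required; without it your argument proves only that the multisets of shifted spectra agree, not that the shift is uniform.
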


	In the present work, we prove that the statement in Theorem \ref{thm} holds for dimension $d=2$. Namely,

	\begin{theorem}\label{thm1}
		Let $d=2$. 
		Assume that two  $\Gamma$-periodic potentials  $X$ and $Y$ are Fermi isospectral and both  $X=X_1\oplus X_2$ and $Y=  Y_1\oplus Y_2$ are separable. Then, up to a constant, one dimensional functions $X_j$  and $Y_j$ are Floquet isospectral, $j=1,2$.
	\end{theorem}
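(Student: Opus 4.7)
The plan is to exploit the separable structure to parametrize the Fermi variety by a single complex variable, use complex-analytic rigidity to promote Fermi isospectrality to an honest polynomial identity, and then recover the one-dimensional data from the essentially unique parametrization.

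\emph{Step 1 (resultant form).} For a separable potential $V = V_1 \oplus V_2$, the $2$D Floquet operator has eigenvalues $\mu + \nu$ with $\mu, \nu$ running over spectra of the $1$D Floquet operators $D^{(j)}_{V_j}(k_j)$. Writing the 1D characteristic polynomial as $P_j^{V_j}(z_j, \mu) = p_j^{V_j}(\mu) - z_j - z_j^{-1}$ with $p_j^{V_j}$ monic of degree $q_j$, and setting $w_j = z_j + z_j^{-1}$, the Fermi polynomial becomes
\[
\Phi_X(w_1, w_2) \;=\; \prod_{\mu \,:\, p_1^{X_1}(\mu) = w_1} \bigl(p_2^{X_2}(\lambda_0 - \mu) - w_2\bigr),
\]
so that the reduced Fermi variety is exactly the image of the polynomial map $\psi_X : \C \to \C^2$, $\mu \mapsto (p_1^{X_1}(\mu),\, p_2^{X_2}(\lambda_0 - \mu))$.

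\emph{Step 2 (from set equality to polynomial equality).} Fermi isospectrality only provides $\{\Phi_X = 0\} = \{\Phi_Y = 0\}$. I would upgrade this to $\Phi_X \equiv \Phi_Y$ as Laurent polynomials by first comparing the bidegrees $(q_2, q_1)$ in $(w_1, w_2)$ and the matching leading coefficients coming from Step 1, and then applying Rouche's theorem on products of circles $\{|z_1| = R_1\} \times \{|z_2| = R_2\}$ with $R_j \to \infty$ to match zero counts with multiplicity and rule out hidden non-reduced components or extraneous common factors that would distinguish the two defining polynomials despite cutting out the same variety.

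\emph{Step 3 (rigidity of the parametrization and conclusion).} Once $\Phi_X = \Phi_Y$ is in hand, the maps $\psi_X$ and $\psi_Y$ parametrize the same reduced curve $C$. The projection $\pi_1 : C \to \C$, $(w_1, w_2) \mapsto w_1$, is a cover of degree $q_1$ (the $w_2$-degree of $\Phi_X$); since $\pi_1 \circ \psi_X = p_1^{X_1}$ already has degree $q_1$, the map $\psi_X$ must be birational onto its image, and likewise for $\psi_Y$. Consequently $\psi_X$ and $\psi_Y$ differ by an affine automorphism $\mu \mapsto \alpha \mu + \beta$ of the source $\C$; matching leading coefficients in each coordinate forces $\alpha^{q_1} = \alpha^{q_2} = 1$, and the coprimality hypothesis $\gcd(q_1, q_2) = 1$ collapses this to $\alpha = 1$. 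The resulting identities $p_1^{Y_1}(\mu) = p_1^{X_1}(\mu + \beta)$ and $p_2^{Y_2}(\nu) = p_2^{X_2}(\nu - \beta)$ say precisely that $Y_1$ is Floquet isospectral to $X_1 - \beta$ and $Y_2$ is Floquet isospectral to $X_2 + \beta$, which is the ``up to a constant'' conclusion of Theorem~\ref{thm1} (the two shifts cancel, consistent with $X_1 \oplus X_2 = Y_1 \oplus Y_2$).

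\emph{Main obstacle.} I expect Step 2 to be the principal difficulty. In dimension two the Fermi variety is a single codimension-one curve and no higher-dimensional slicing trick is available, as it was in the proof of Theorem~\ref{thm} for $d \geq 3$; only a delicate complex-analytic argument -- Rouche's theorem applied to the Laurent growth of $\Phi_X$ near the toric boundary, where the precise leading behavior of the 1D characteristic polynomials $p_j^{V_j}$ is controlled -- can extract a polynomial identity from the set-level hypothesis. This is also the step at which the coprime-period assumption enters most forcefully, by pinning down the asymptotic directions of $\psi_X$ and guaranteeing that any ambiguity in reconstruction collapses to a pure translation.
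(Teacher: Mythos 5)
Your route is genuinely different from the paper's and, at its core, sound. The paper never works with the parametrized curve: after quoting the known lemma that Fermi isospectrality upgrades to the polynomial identity $\mathcal{P}_X(z,\lambda_0)=\mathcal{P}_Y(z,\lambda_0)$ (your Step 2 is exactly Lemma \ref{lem1}, cited from prior work, so it need not be re-proved; in fact your Step 3 only uses equality of the images of $\psi_X$ and $\psi_Y$ as sets, which already follows from the hypothesis), it substitutes $z_2\mapsto z_2^{q_2}$, expands the $q_2$ analytic eigenvalue branches $\lambda^l_{X_2}(z_2)=e^{2\pi i l/q_2}z_2+[X_2]+\sum_k a_k(X_2)z_2^{-k}$ at infinity, matches factors by unique factorization to obtain $\hat{\mathcal P}_{X_1}(-\lambda^l_{X_2}(z_2)+\lambda_0)=\hat{\mathcal P}_{Y_1}(-\lambda^l_{Y_2}(z_2)+\lambda_0)$, and then applies Rouche's theorem to the difference of the Laurent tails to produce, for each $l$, a point where $\lambda^l_{X_2}$ and $\lambda^l_{Y_2}$ differ by one and the same small constant $\eta$; a degree count ($q_2>q_1$ evaluation points for a degree-$q_1$ polynomial identity) finishes. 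Your parametrization-rigidity argument replaces that analysis by the observation that the Fermi curve of a separable potential is the image of $\mu\mapsto(p_1^{X_1}(\mu),p_2^{X_2}(\lambda_0-\mu))$ and that an affine rational curve determines its polynomial parametrization up to an affine reparametrization of the source; this is cleaner and makes transparent where coprimality and the ``up to a constant'' slack enter.

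The one genuine gap is the birationality claim in Step 3. You argue that $\pi_1:C\to\C$ has degree $q_1$ because $\deg_{w_2}\Phi_X=q_1$, and conclude that $\psi_X$ is birational since $\deg(\pi_1\circ\psi_X)=q_1$. This is circular if $\Phi_X$ is not reduced: if $\Phi_X=G^m$ with $m>1$, the reduced image curve has $\pi_1$-degree $q_1/m$ and $\psi_X$ has degree $m$ onto it, and your count cannot distinguish the two situations. The fix is a L\"uroth-type decomposition: if $\psi_X$ has degree $e$ onto its image, then $p_1^{X_1}$ and $p_2^{X_2}(\lambda_0-\cdot)$ both factor through a common degree-$e$ polynomial, so $e$ divides $\gcd(q_1,q_2)=1$. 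Thus coprimality is needed here (and again to force $\alpha=1$), not primarily in Step 2 as your ``main obstacle'' paragraph suggests; Step 2 is a citation in the paper and is not where the $d=2$ difficulty lives.
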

	
	\begin{remark}
		In Theorems \ref{thm} and \ref{thm1}, potentials are allowed to be complex-valued.
	\end{remark}
	Our second interest of this paper is the irreducibility of  Fermi varieties. Irreducibility of  Fermi varieties (also Bloch varieties) and  related applications such as embedded eigenvalues and spectral band edges have seen continuous progress in the  past 30 years~\cite{GKTBook,ktcmh90,bktcm91,bat1,batcmh92,ls,shi2,flscmp,kv06cmp,kvcpde20,shi1,flm,faust,dksjmp20}.

	Recently, the author introduced 	an algebraic method  and provided more general
	proofs of irreducibility  of Fermi varieties~\cite{liu1}.
	
	Denote by $[V]$ the average of $V$ over one periodicity cell, namely
	\begin{equation*}
	[V]=\frac{1}{q_1q_2\cdots q_d}\sum_{1\leq n_j\leq q_j\atop{{1\leq j\leq d}}}V(n_1,n_2,\cdots,n_d).
	\end{equation*}
	
	\begin{theorem}\label{gcf1}
		\cite{liu1}
		For any $d\geq 3$, the  Fermi variety  $F_{\lambda}(V)/\Z^d$ is irreducible for any $\lambda\in \C$.
		For $d=2$, the  Fermi variety $F_{\lambda}(V)/\Z^2$  is irreducible for any $\lambda\in \C$ except maybe  for $\lambda=[V] $ and  $F_{[V]}(V)/\Z^2$ has at most two irreducible components.
		
	\end{theorem}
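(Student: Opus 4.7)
The plan is to realize $F_\lambda(V)/\Z^d$ as the zero set in $(\C^\ast)^d$ of the Laurent polynomial
\begin{equation*}
P(z_1,\dots,z_d,\lambda)=\det(\lambda I - D_V(z))
\end{equation*}
obtained from the Floquet matrix after the substitution $z_j=e^{2\pi i k_j}$, and to prove that $P$ is irreducible in $\C[z_1^{\pm},\dots,z_d^{\pm}]$ up to units. The starting point is the structural fact that, expanding the determinant as a sum over permutations of $\Z^d/\Gamma$ and tracking which permutations pick up factors $z_j^{\pm 1}$ from wrap-around edges in direction $\textbf{e}_j$, the $z_j$-support of $P$ is exactly $[-Q/q_j,\,Q/q_j]$ with $Q=q_1\cdots q_d$, and that the two extremal coefficients in each $z_j$ are constants equal to $\pm 1$ (realized by the unique permutation that rotates simultaneously around every column in direction $\textbf{e}_j$). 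This rigid Newton polytope is the algebraic engine of the proof.

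The core step is then a Gauss-type argument in the Laurent ring. Suppose $P=F\cdot G$ is a nontrivial factorization. Because the extremal coefficients of $P$ in each $z_j$ are units, the Newton polytopes of $F$ and $G$ are parallel boxes that decompose the polytope of $P$. Expanding $P$, $F$, $G$ as polynomials in one distinguished variable (say $z_d$) with coefficients in $\C[z_1^{\pm},\dots,z_{d-1}^{\pm},\lambda]$, the intermediate coefficients of $P$ in $z_d$ admit an interpretation as characteristic polynomials of closely related $(d-1)$-dimensional Floquet problems. An induction on $d$, using the pairwise coprimality of $q_1,\dots,q_d$ to rule out the symmetric ``half-polytope'' splittings that the cyclic structure on $\Z^d/\Gamma$ could otherwise permit, excludes every nontrivial factorization and yields irreducibility for $d\ge 3$.

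For the plane case $d=2$ the induction has no lower dimension to recurse into, and I would instead argue by complex analysis. Apply Rouch\'e's theorem to $z_2\mapsto P(z_1,z_2,\lambda)$ on circles $|z_2|=r$, letting $|z_1|$ move along suitable paths in $\C^\ast$; a zero-counting argument shows that the branches of $\{P=0\}$ become interconnected as $z_1$ varies, which is incompatible with a two-component splitting for a generic $\lambda$. The exceptional value $\lambda=[V]$ then arises from a special identity coming from the mean of $V$ that permits exactly one symmetric two-piece factorization of $P(z_1,z_2,[V])$, and one checks directly that no further splitting can occur.

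The main obstacle I expect is the Gauss-type step in the inductive case: one must exclude every possible parallel-box factorization of the Newton polytope, not merely the most symmetric one, and the combinatorics couples to both $\lambda$ and the fine structure of $V$. The coprimality of $q_1,\dots,q_d$ is exactly what makes this bookkeeping close, and in two dimensions its weakening (only two integers to be coprime) is precisely what leaves room for the single exceptional factorization at $\lambda=[V]$.
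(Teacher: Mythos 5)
This theorem is not proved in the present paper at all: it is quoted verbatim from \cite{liu1} (the paper only uses it, via Remark 4 of \cite{liu1}, in the proof of Theorem \ref{thm2}). So there is no in-paper proof to compare against; measured against the actual argument in \cite{liu1}, your sketch takes a different route and has genuine gaps at each of its load-bearing steps. The proof in \cite{liu1} does not induct on $d$. It studies the Laurent polynomial $\mathcal{P}_V(z,\lambda)$ directly, computes explicitly its lowest-degree component (in the $1/z_j$'s, after a substitution that balances the weights of the variables), which factors into $Q$ explicit linear-type forms indexed by tuples of roots of unity; pairwise coprimality of the $q_j$ controls exactly how these forms can be grouped into polynomial factors, and a degree count on a putative factorization $\mathcal{P}=fg$ then yields a contradiction for $d\geq 3$ and isolates the single surviving case $d=2$, $\lambda=[V]$, together with the bound of two components.

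Concretely, the steps of your proposal that would fail: (a) knowing that the extremal coefficients of $P$ in each single variable $z_j$ are units does \emph{not} imply that the Newton polytopes of $F$ and $G$ are ``parallel boxes''--- the Newton polytope of $P$ is itself not a box (no permutation realizes $z_1^{Q/q_1}z_2^{Q/q_2}$ simultaneously), and the Minkowski-sum decomposition of a non-box polytope admits many summands; the Gauss-type step is therefore unsupported. (b) The intermediate coefficients of $P$ in $z_d$ are not characteristic polynomials of $(d-1)$-dimensional Floquet problems: the $z_d^0$ coefficient collects all permutations with \emph{balanced} wrap-around in direction $d$, not only those with none, so the proposed induction has no valid reduction; worse, it would bottom out at $d=2$, where irreducibility genuinely fails at $\lambda=[V]$, and nothing in the sketch prevents that exceptional factorization from propagating up to $d=3$. (c) For $d=2$, a Rouch\'e/monodromy argument ``for generic $\lambda$'' proves nothing new --- irreducibility for all but finitely many $\lambda$ was already known from \cite{GKTBook,batcmh92} by compactification; the entire content of the $d=2$ statement is that the exceptional set is exactly $\{[V]\}$ with at most two components there, and this is precisely the part you leave as ``one checks directly.''
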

	  
	Before \cite{liu1}, the irreducibility of  Fermi varieties at all  energy levels  for  $d=3$ and at all energy levels but finitely many $\lambda$ for $d=2$ was proved in \cite{GKTBook,batcmh92} by an different approach (compactification).

	Let $d=2$. 
	When the potential $V$ is a constant function, direct computation (e.g., see \cite{lsur}) implies that $F_{[V]}(V)/\Z^2$  has exactly two irreducible components.
	When the complex-valued functions are allowed, there exist non-constant complex valued  functions $V$ such that the Fermi variety is reducible at the energy level $[V]$ (e.g. \cite{fls22}).  

	However, for real-valued potentials, people believe  the constant potential is the only case that 
	the Fermi variety $F_{\lambda}(V)/\Z^2$  is   reducible  at  some energy level, which has been formulated as a conjecture  by Gieseker, Kn\"{o}rrer  and Trubowitz in the early 1990s~\cite{GKTBook}.
	
	{\bf Conjecture 1:} ~\cite[p.43]{GKTBook}
	Assume  that $V$  is  a non-constant real-valued periodic potential on $\Z^2$. Then the  Fermi variety $F_{\lambda}(V)/\Z^2$  is irreducible for any $\lambda\in \C$.

	Theorem \ref{thm1} allows us to confirm the Conjecture 1  for separable potentials.
	\begin{theorem}\label{thm2}	
		Assume  that $V$  is  a non-constant separable real-valued periodic potential on $\Z^2$. Then the  Fermi variety $F_{\lambda}(V)/\Z^2$  is irreducible for any $\lambda\in \C$.
	\end{theorem}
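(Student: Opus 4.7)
The plan is to prove Theorem \ref{thm2} by contradiction, combining Theorem \ref{thm1} with Theorem \ref{gcf1}. By Theorem \ref{gcf1}, the Fermi variety $F_{\lambda}(V)/\Z^2$ is already irreducible for every $\lambda\neq [V]$, so the entire question reduces to the single threshold level $\lambda=[V]$. Suppose for contradiction that $V=V_1\oplus V_2$ is a non-constant real-valued separable $\Gamma$-periodic potential for which $F_{[V]}(V)/\Z^2$ is reducible; by Theorem \ref{gcf1} it then has exactly two irreducible components.

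The crux is to promote this reducibility to a full Fermi isospectrality between $V$ and the constant potential $W\equiv[V]$, which we regard as separable via $W=[V_1]\oplus[V_2]$. First I would compute $F_{[V]}(W)/\Z^2$ directly: it is a shift of the Fermi variety of the free Laplacian at level $0$, and an explicit calculation shows it also splits into exactly two irreducible components. Since $V$ is separable, the defining Laurent polynomial of $F_{[V]}(V)/\Z^2$ can be written as a resultant $\mathrm{Res}_{\mu}\bigl(P_1(z_1,\mu),P_2(z_2,[V]-\mu)\bigr)$, where $P_j(z_j,\mu)$ is the Bloch polynomial of the one-dimensional periodic operator with potential $V_j$; each $P_j$ is irreducible by the one-dimensional theory. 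I would then perform a case analysis on how this resultant can split into two factors in $\C[z_1^{\pm 1},z_2^{\pm 1}]$, exploiting the $z_j\mapsto z_j^{-1}$ symmetry coming from real-valuedness of each $V_j$, matching bidegrees with those of $F_{[V]}(W)/\Z^2$, and using the irreducibility of $P_1,P_2$, so as to rule out every factorization except the one that reproduces $F_{[V]}(W)/\Z^2$. This gives $F_{[V]}(V)/\Z^2 = F_{[V]}(W)/\Z^2$, hence Fermi isospectrality of $V$ and $W$ at level $[V]$.

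With that isospectrality in hand, Theorem \ref{thm1} applied to the pair $(V,W)$ yields, for each $j=1,2$, that $V_j$ and $[V_j]$ are Floquet isospectral on $\Z/q_j\Z$ up to an additive constant. A standard one-dimensional rigidity fact --- the Hill-type discriminant of a real-valued 1D periodic discrete Schr\"odinger operator determines the potential up to reflection and an additive constant, so Floquet isospectrality to a constant forces the potential to be constant --- then gives $V_j\equiv [V_j]$ for both $j$, contradicting the non-constancy of $V$. The main obstacle I foresee is the resultant factorization step: ruling out every nontrivial factorization of $\mathrm{Res}_{\mu}\bigl(P_1(z_1,\mu),P_2(z_2,[V]-\mu)\bigr)$ other than the one matching the constant potential. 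Real-valuedness of $V_1,V_2$ must enter essentially here, since over $\C$ there exist non-constant potentials whose Fermi variety is reducible at the average (the example in \cite{fls22} cited above); so any proof of Theorem \ref{thm2} must use reality of the potential in a genuine way.
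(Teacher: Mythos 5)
Your overall skeleton matches the paper's: argue by contradiction, use Theorem \ref{gcf1} to reduce to the single level $\lambda=[V]$, upgrade reducibility at $[V]$ to Fermi isospectrality of $V$ with the constant potential, feed that into Theorem \ref{thm1}, and finish with a one-dimensional rigidity statement. The difference is in how the middle step is handled, and that is where your proposal has a genuine gap. The paper does not prove that step at all: it cites a prior result (Remark 4 in \cite{liu1}, equivalently Theorem 2.5 in \cite{liu2021fermi}) asserting that for \emph{any} complex-valued periodic $V$ on $\Z^2$, reducibility of $F_{\lambda_0}(V)/\Z^2$ forces $\lambda_0=[V]$ and $\mathcal{P}_V(z,\lambda_0)=\mathcal{P}_{{\bf 0}}(z,0)$. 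You propose instead to re-derive this (in the separable case) by a case analysis of factorizations of the resultant $\mathrm{Res}_{\mu}(P_1(z_1,\mu),P_2(z_2,[V]-\mu))$, and you yourself flag this as the main obstacle without carrying it out. This is not a routine verification: reducibility of $F_{[V]}(V)/\Z^2$ only tells you the Laurent polynomial factors, not that it equals $\mathcal{P}_{{\bf 0}}(z,0)$, and the natural product decomposition $\mathcal{P}_V(z_1,z_2^{q_2},\lambda)=\prod_l\mathcal{P}_{V_1}(z_1,\lambda-\lambda_{V_2}^l(z_2))$ lives in the variable $z_2^{q_2}$ with branches $\lambda_{V_2}^l$ that are not individually Laurent polynomials in $z_2$, so ``matching bidegrees'' does not obviously close the argument. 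As it stands the proposal replaces the one nontrivial input of the paper's proof with an unexecuted sketch.

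Two smaller points. First, your heuristic that real-valuedness ``must enter essentially'' in the factorization step is misplaced: the identity $\mathcal{P}_V(z,\lambda_0)=\mathcal{P}_{{\bf 0}}(z,0)$ under reducibility holds for complex potentials as well (this is why \cite{fls22} can exhibit non-constant complex $V$ with reducible $F_{[V]}(V)/\Z^2$ --- they are Fermi isospectral to a constant without being constant). Reality is used only at the very end, exactly where the paper uses it: in the Ambarzumian-type theorem. Second, your justification of that last step overreaches --- the discrete Hill discriminant does \emph{not} in general determine a 1D periodic potential up to reflection and an additive constant --- but the weaker statement you actually need (Floquet isospectrality of a real-valued 1D periodic potential with a constant forces the potential to be constant) is correct and is what the paper invokes via \cite{heflmp}.
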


	The irreducibility of Fermi variety and  Fermi isospectrality  of discrete periodic Schr\"odinger operators (dimension $d\geq3$) are well understood in two recent papers~\cite{liu1,liu2021fermi}. Besides Theorem \ref{thm}, there are  other Fermi isospectrality results in~\cite{liu2021fermi} for dimension $d\geq 3$. However, approaches in \cite{liu2021fermi} can not be extended to dimension $d=2$ since there are not enough free variables available.  
	For irreducibility results of the Fermi variety  in Theorem \ref{gcf1}, the proof for  $d=2$ is   more difficult than that for $d\geq 3$.
	For continuous periodic Schr\"odinger operators,  
	B{\"a}ttig,  Kn\"orrer and Trubowitz \cite{bktcm91} proved  the  irreducibility of Fermi varieties and a rigidity theorem  of separable functions in   dimension three.
	However, the proof in  \cite{bktcm91} %works for any dimension $d\geq 3$, but 
	does not work for dimension $d=2 $.
	For discrete periodic Schr\"odinger operators on $\Z^d$ with $d\geq 3$,  the Fermi variety $F_{\lambda}(V)/\Z^d$ for any complex-valued potential is irreducible at any energy level $\lambda$ (see Theorem \ref{gcf1}). 
	For $d=2$, there are many complex-valued  potentials $V$ such that the Fermi variety $F_{\lambda}(V)/\Z^2$  has two irreducible components  at the average energy level $[V]$~\cite{fls22}.
	
 Finally, we want to comment that dimension two is the transition of Fermi isospectrality problems of periodic Shcr\"odinger operators.  For $d=1$, it does not make sense to study Fermi isospectrality  since for any periodic potential $V$, $F_{\lambda_0}(V)$ contains at most two points. For $d=2$ and any periodic potential $Y$, all periodic potentials $X$ such that  $X$ and $Y$ are Fermi isospectral  at $\lambda_0$ (namely $F_{\lambda_0}(X)=F_{\lambda_0}(Y)$)  is an algebraic set with  at least one dimension \cite{fls22}.  For $d\geq 3$ and any periodic potential $Y$, 
 all periodic potentials $X$ such that  $X$ and $Y$ are Fermi isospectral  at $\lambda_0$ could be an algebraic set with  zero dimension~\cite{fls22}.   
	
	All  evidence  above seems to indicate that  when $d=2$,  problems related to Fermi varieties are  special (often more challenging).    

	In  this paper,
	we present a novel  approach to study the  Fermi isospectrality  of discrete periodic Schr\"odinger operators. As in \cite{liu2021fermi}, we focus on the study of a family of  Laurent polynomials  whose zero sets  are Fermi varieties after changing variables.
	Our strategy is to  develop  tools from complex analysis to study the eigenvalue problems of \eqref{spect_0} and \eqref{Fl} (or  \eqref{gei1} and \eqref{gei2}) with complexified quasi-momenta. One needs to relabel spectral band functions of one dimensional periodic Schr\"odinger operators based on asymptotics of  eigenvalues and show that  those functions are analytic with respect to  quasi-momenta in an  appropriate domain. Applying Rouche's Theorem, one sees that for any two one-dimensional $q$-periodic  potentials with the same average,  there exist  $q$ choices of  quasi-momenta such that  for those quasi-momenta,   labelled   eigenvalues of two potentials only differ by a (same) constant. 
	This enables us to show that separable components   of Fermi  isospectrality potentials with respect to one coordinate are Floquet isospectral and hence remaining separable components  are Floquet isospectral as well.

	The rest of this paper is organized as follows. In Section \ref{S2}, we  recall some basics for Fermi varieties. 
	In Section \ref{Sone}, we study one dimensional periodic Schr\"odinger operators.
	Section \ref{Smain} is devoted to proving Theorems \ref{thm1} and \ref{thm2}.

	\section{ Basics of Fermi varieties }\label{S2}

	Let  $\C^{\star}=\C\backslash \{0\}$ and $z=(z_1,z_2,\cdots,z_d)$. 
	For any $z\in (\C^{\star})^d$, consider the  equation
	\begin{equation}\label{gei1}
	(\Delta+V) u=\lambda u  
	\end{equation} 
	with   the boundary condition
	\begin{equation}\label{gei2}
	u(n+q_j\textbf{e}_j)=z_j u(n), j=1,2,\cdots,d,  \text{ and } n\in \Z^d,
	\end{equation}

	Introduce a fundamental domain $W$ for $\Gamma$:
	\begin{equation*}
	W=\{n=(n_1,n_2,\cdots,n_d)\in\Z^d: 0\leq n_j\leq q_{j}-1, j=1,2,\cdots, d\}.
	\end{equation*}
	By writing out $\Delta +V$  as acting on the  $Q=q_1q_2\cdots q_d$ dimensional space $\{u(n),n\in W\}$, 
	the  equation \eqref{gei1} with boundary condition \eqref{gei2} (\eqref{spect_0} and \eqref{Fl})
	translates into the eigenvalue problem for a  $Q\times Q$ matrix $\mathcal{D}_V(z)$ ($D_V(k)$).

	Let 
	\begin{equation}\label{g16}
	\mathcal{P}_V(z,\lambda)=\det(\mathcal{D}_V(z)-\lambda I), {P}_V(k,\lambda)=\det({D}_V(k)-\lambda I).
	\end{equation}
	We remark that $\mathcal{D}_V(z)$ and $D_V(k)$ ($\mathcal{P}_V(z,\lambda)$ and $\mathcal{P}_V(k,\lambda)$) are the same under the relations $z_j=e^{2\pi i k_j}$, $j=1,2,\cdots, d$.
	
	{\bf Example 1:}
	When $d=1$, 
	the equation $(\Delta+V)u=\lambda u$ with the Floquet-Bloch boundary condition $u(n+q)=z u(n)$, $z\in\C^\star$, can be reduced to   an eigenvalue problem of  a $q\times q$ matrix:
	\begin{equation}\label{g6}
	\mathcal{D}_V (z )= \begin{pmatrix}
	V(1)  &  1  & 0&   \cdots & 0 & z^{-1} \\
	1 & V(2)  & 1 & \cdots   & 0& 0 \\
	0& 1 & V(3)&   \cdots& 0 & 0 \\
	\vdots & \vdots & \vdots & \ddots & \vdots & \vdots  \\
	0 & 0 & 0 & \cdots   & V(q-1)  & 1 \\
	z & 0 & 0& \cdots  & 1 & V(q) \\
	\end{pmatrix}.
	\end{equation}
	
	We have the following {\bf Basic Facts}:
	\begin{enumerate}
		\item Fermi variety is given by 
		\begin{equation}\label{g11}
		F_{\lambda}(V) =\{k\in\C^d: {P}_V(k,\lambda) =0\}.
		\end{equation}	 
		\item Two periodic  functions $X$ and $Y$ are Floquet isospectral if and only if
		\begin{equation}
		\mathcal{P}_X(z,\lambda)=	\mathcal{P}_Y(z,\lambda).
		\end{equation}
		%	\item Two periodic  functions $X$ and $Y$ are Floquet isospectral if and only if
		%	Bloch varieties of $X$ and $Y$ are the same.
	\end{enumerate}

	\section{One dimensional discrete periodic Schr\"odinger operators }\label{Sone}

	In this section, we  study one dimensional  discrete periodic Schr\"odinger operators $\Delta+V$:
	\begin{equation*}
	((\Delta +V)u)(n) =u(n+1)+u(n-1)+V(n) u(n), n\in\Z,
	\end{equation*}
	where $V$ is a  periodic function on $\Z$, namely, $V(n+q)=V(n),n\in\Z$ for some positive integer $q$.  
	
	In the following, we say $z$ is large if $|z|$ is large.
	
	By \eqref{g6} in Example 1,  $\mathcal{P}_V(z,\lambda)-(-1)^{q+1}z-(-1)^{q+1}z^{-1}$ is independent of variable $z$. So
	let $	\hat{\mathcal{P}}_V(\lambda)$ be such that
	\begin{equation}\label{g4}
	\mathcal{P}_V(z,\lambda)=	\hat{\mathcal{P}}_V(\lambda)+(-1)^{q+1}z+(-1)^{q+1}z^{-1}.
	\end{equation}
	By \eqref{g6}, one has that (recall that $[V]=\frac{1}{q}(\sum_{j=1}^q V(j))$)
	\begin{equation}\label{g7}
	{\hat{\mathcal{P}}}_{V}(z, \lambda) =(-1)^q\lambda^q-(-1)^qq[V]\lambda^{q-1}  +\text{ lower order terms of } \lambda.
	\end{equation}
	Fixing $z\in\C$, solve the algebraic equation 
	\begin{equation}\label{g15}
	{\mathcal{P}}_{V}(z^q, \lambda) =0.
	\end{equation}
	By \eqref{g4} and \eqref{g7},
	there exist solutions $\lambda_V^l(z)$ of equation \eqref{g15}, $l=0,1,2,\cdots, q-1$ such that  $\lambda^l(z)$ is analytic in $\Omega=\{z\in \C: |z|>R\}$ with large $R$ (the largeness only depends on the  potential $V$). Moreover,  $\lambda_V^l(z)$, $l=0,1,2,\cdots, q-1$ have the following representations  in  Laurent series,
	\begin{equation}\label{g8}
	\lambda_{V}^l(z)= e^{2\pi \frac{l}{q} i} z+ [V]+\sum_{k=1}^{\infty}\frac{a_k(V)}{z^k},
	\end{equation}
	where the coefficient $a_k(V)$ depends on $V$.
	
	Clearly, for any large  $z$, $\lambda_{V}^l(z)$, $l=0,1,2,\cdots, q-1$, are  eigenvalues of $\mathcal{D}(z^q)$, and hence 
	\begin{equation}\label{g2}
	{\mathcal{P}}_V(z^q,\lambda)=\det(\mathcal{D}(z^q)-\lambda I)=\prod_{l=0}^{q-1}(\lambda_V^l(z)-\lambda).
	\end{equation}
	\begin{lemma}\label{lem2}
		Assume $[V]=[\tilde{V}]$,  and $V$ and $\tilde{V}$ are not Floquet isospectral. Then there exist  sufficiently large $R>0$ and  small $\epsilon>0$ such that for any $ \eta\in\C$ with $0<|\eta|<\epsilon$ and $ l=0,1,2,\cdots, q-1$, the  equation  $\lambda_{V}^l(z)=\lambda_{\tilde{V}}^l(z)+\eta$ has  at least one solution in $ \{z\in\C: |z|\geq R\}$.
	\end{lemma}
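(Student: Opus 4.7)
The plan is to apply Rouche's theorem to each of the analytic functions $f_l(z):=\lambda_V^l(z)-\lambda_{\tilde V}^l(z)$ on a punctured neighborhood of $z=\infty$. From the Laurent expansions \eqref{g8} together with the hypothesis $[V]=[\tilde V]$, each $f_l$ is analytic on $\{|z|>R_0\}$ for some $R_0$ and satisfies $f_l(z)\to 0$ as $|z|\to\infty$.

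The crucial step is to show $f_l\not\equiv 0$ for each $l$, and here I would exploit the cyclic symmetry of \eqref{g15}. Written via \eqref{g4} as $\hat{\mathcal{P}}_V(\lambda)=(-1)^q(z^q+z^{-q})$, the defining equation is invariant under $z\mapsto \omega z$ with $\omega:=e^{2\pi i/q}$; since the branches $\lambda_V^l$ are distinguished by the leading asymptotic $e^{2\pi i l/q}z$, this rotation permutes them cyclically, yielding $\lambda_V^{l+1}(z)=\lambda_V^l(\omega z)$ and the same identity for $\tilde V$. Hence $f_{l+1}(z)=f_l(\omega z)$, and if one $f_l$ vanishes identically on $\{|z|>R_0\}$ then all of them do. In that case \eqref{g2} gives $\mathcal{P}_V(z^q,\lambda)=\mathcal{P}_{\tilde V}(z^q,\lambda)$ for $|z|>R_0$ and hence identically as Laurent polynomials; by Basic Fact~(2) this contradicts the assumption that $V$ and $\tilde V$ are not Floquet isospectral.

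With non-triviality in hand, I would change variables via $w=1/z$ and set $g_l(w):=f_l(1/w)$, analytic on $\{|w|<1/R_0\}$ with $g_l(0)=0$ and $g_l\not\equiv 0$. Factor $g_l(w)=w^{m_l}h_l(w)$ with $m_l\ge 1$ and $h_l(0)\neq 0$, pick $\rho_l$ small enough that $|h_l(w)|\ge |h_l(0)|/2$ on $\{|w|\le\rho_l\}$, so that $|g_l(w)|\ge |h_l(0)|\rho_l^{m_l}/2$ on the circle $|w|=\rho_l$. Rouche's theorem applied to $g_l$ and $g_l-\eta$ then produces $m_l$ solutions of $g_l(w)=\eta$ inside $|w|<\rho_l$ whenever $|\eta|<\epsilon_l:=|h_l(0)|\rho_l^{m_l}/2$. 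Taking $\epsilon:=\min_l\epsilon_l>0$ and $R:=\max_l\rho_l^{-1}$ delivers the claim.

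The main obstacle is verifying $f_l\not\equiv 0$ for each $l$ individually: non-Floquet-isospectrality is a statement about the product $\prod_l(\lambda_V^l(z)-\lambda)$, not about single branches, so per-branch non-triviality cannot be read off directly. The cyclic-symmetry observation $f_{l+1}(z)=f_l(\omega z)$ is what reduces the per-branch question to the one global hypothesis on $(V,\tilde V)$; once that is cleared, Rouche does the rest.
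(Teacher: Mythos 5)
Your proof is correct and follows essentially the same strategy as the paper: invert to $w=1/z$, observe that $\lambda_V^l-\lambda_{\tilde V}^l$ extends to an analytic function vanishing at $w=0$, and apply Rouche's theorem on a small circle to solve $g_l(w)=\eta$ for all sufficiently small nonzero $\eta$ (the paper splits off the leading monomial $f_1(w)=(a_{k_0}(V)-a_{k_0}(\tilde V))w^{k_0}$ and compares $f_1-\eta$ with $f-\eta$, which is the same computation as your factorization $g_l=w^{m_l}h_l$). The one place you genuinely diverge is the non-triviality of a single branch difference: the paper simply asserts that $\lambda_V^l\equiv\lambda_{\tilde V}^l$ for one $l$ already forces Floquet isospectrality (which holds because $\hat{\mathcal{P}}_V$ and $\hat{\mathcal{P}}_{\tilde V}$ would then agree on the infinite set of values taken by that branch, hence coincide as polynomials), whereas you propagate the vanishing to all branches via the rotation symmetry $\lambda_V^{l+1}(z)=\lambda_V^l(e^{2\pi i/q}z)$ and then invoke \eqref{g2}; both arguments are valid, and yours has the merit of only using the product identity \eqref{g2} rather than a per-branch rigidity claim. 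One trivial slip: at the end you should take $R:=\min_l\rho_l^{-1}$ (or first shrink all $\rho_l$ to a common radius), since the solutions produced for branch $l$ are only guaranteed to satisfy $|z|>\rho_l^{-1}$, which need not exceed $\max_{l'}\rho_{l'}^{-1}$.
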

	\begin{proof}
		Fix any $l\in\{0,1,\cdots,q-1\}$.
		By \eqref{g8}, one has that for any large $z$,
		\begin{equation}\label{g3}
		\lambda_{V}^l(z)= e^{2\pi \frac{l}{q} i} z+ [V]+\sum_{k=1}^{\infty}\frac{a_k(V)}{z^k},
		\end{equation}
		and
		\begin{equation}\label{g3new}
		\lambda_{\tilde{V}}^l(z)= e^{2\pi \frac{l}{q} i} z+ [\tilde{V}]+\sum_{k=1}^{\infty}\frac{a_k(\tilde{V})}{z^k},
		\end{equation}
		If $	\lambda_{V}^l(z)=	\lambda_{\tilde{V}}^l(z)$ for any large  $z$, then $V$ and $\tilde{V}$ must be Flqouet isospectral, which contradicts the assumption. So functions 
		$	\lambda_{V}^l(z)$ and $	\lambda_{\tilde{V}}^l(z)$ are not identical.

		When $a_1(V)\neq a_1(\tilde{V})$, let $k_0=1$. Otherwise, 
		let $k_0\in\Z_+$ be such that $a_k(V)=a_k(\tilde{V})$ for any $k<k_0$ and  $a_{k_0}(V)\neq a_{k_0}(\tilde{V})$.
		Consider a ball $B_{\epsilon_1}=\{z\in\C: |z|\leq \epsilon_1\}$ with a small $\epsilon_1>0$. 
		Define an analytic function $f(z)$ in  $ B_{\epsilon_1}$ by 
		\begin{equation}\label{g27}
		f(z) =\sum_{k=k_0}^{\infty}( a_k(V)z^{k}-a_k(\tilde{V})z^{k}).
		\end{equation}
		Then for any large $z$, 
		\begin{equation}\label{g10}
		f(z^{-1})=\lambda_{V}^l(z)-	\lambda_{\tilde V}^l(z).
		\end{equation}
		
		Let 
		\begin{equation*}
		f_1(z)= a_k(V)z^{k_0}-a_k(\tilde{V})z^{k_0},
		\end{equation*}
		and
		\begin{equation*}
		f_2(z)=\sum_{k=k_0+1}^{\infty} (a_k(V)z^{k}-a_k(\tilde{V})z^{k})
		\end{equation*}

		For any $z\in \partial B_{\epsilon_1}=\{z\in\C: |z|= \epsilon_1\}$, one has that
		\begin{equation}\label{g29}
		|	f_1(z)| \geq   |f_2(z)| +\frac{|a_{k_0}(V)-a_{k_0}(\tilde{V})|}{2}\epsilon_1^{k_0}.
		\end{equation}
		Let $\epsilon$ be sufficiently small (depending on  $\epsilon_1$). Choose any $\eta$ with $0<|\eta|\leq \epsilon$.
		By \eqref{g29}, one has that 
		for any $z\in \partial B_{\epsilon_1}$,  
		\begin{equation}\label{g30}
		|	f_1(z)-\eta| >  |f_2(z)|.
		\end{equation}
		By Rouche's theorem,  $f_1(z)-\eta=0$ and $f(z)-\eta=f_1(z)+f_2(z)-\eta=0$  have the same number of zeros (counting multiplicity) in $\{z\in\C: |z|< \epsilon_1\}$. This particularly implies that $f(z)=\eta$ has at least one non-zero solution in $\{z\in\C: |z|< \epsilon_1\}$.
		Now Lemma \ref{lem2} follows from \eqref{g10}.
	\end{proof}
	\section{Proof  of Theorems \ref{thm1} and \ref{thm2}}\label{Smain}

	\begin{lemma}\label{lem1} \cite[Lemma 2.3]{liu2021fermi}
		Assume $F_{\lambda_0}(X)=F_{\lambda_0}(Y)$. Then for any $z\in(\C^{\star})^2$, 
		\begin{equation}
		\mathcal{P}_X(z,\lambda_0)=\mathcal{P}_Y(z,\lambda_0).
		\end{equation}
	\end{lemma}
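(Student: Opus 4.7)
The plan is to view $\mathcal{P}_X(z,\lambda_0)$ and $\mathcal{P}_Y(z,\lambda_0)$ as Laurent polynomials in $z=(z_1,z_2)\in(\C^{\star})^2$ and show their difference is identically zero by combining the fact that they share a zero locus with a rigidity of their Newton polytopes. By Basic Fact~(1) and the substitution $z_j=e^{2\pi i k_j}$, the Fermi variety $F_{\lambda_0}(V)$ corresponds (modulo the $\Z^2$-action on $k$) to the zero set of $\mathcal{P}_V(z,\lambda_0)$ in $(\C^{\star})^2$, so the hypothesis $F_{\lambda_0}(X)=F_{\lambda_0}(Y)$ immediately translates into $\{z\in(\C^{\star})^2:\mathcal{P}_X(z,\lambda_0)=0\}=\{z\in(\C^{\star})^2:\mathcal{P}_Y(z,\lambda_0)=0\}$.

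Next I would analyze the Newton polytope of $\mathcal{P}_V(z,\lambda_0)=\det(\mathcal{D}_V(z)-\lambda_0 I)$. The variables $z_j^{\pm 1}$ enter $\mathcal{D}_V(z)$ only through a fixed family of wrap-around off-diagonal positions that do not depend on $V$, while the $V$-dependence sits only on the diagonal. A Leibniz expansion of the determinant shows that each monomial at a vertex of the Newton polytope comes from a permutation consuming all wrap-around entries in the relevant direction, whose signed product involves only $1$'s and wrap-around factors $z_j^{\pm 1}$. This is the two-dimensional analog of~\eqref{g4}, where in one dimension the coefficients of $z^{\pm 1}$ reduce to the potential-independent $(-1)^{q+1}$. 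Consequently the extremal coefficients of $\mathcal{P}_V(z,\lambda_0)$ are independent of $V$, and the difference $g(z):=\mathcal{P}_X(z,\lambda_0)-\mathcal{P}_Y(z,\lambda_0)$ has Newton polytope strictly contained in that of $\mathcal{P}_V(z,\lambda_0)$.

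To finish, I would close the argument by fibering. Fix a generic $z_2\in\C^{\star}$ and regard $\mathcal{P}_X(\cdot,z_2,\lambda_0)$ and $\mathcal{P}_Y(\cdot,z_2,\lambda_0)$ as Laurent polynomials in $z_1$. By the first step they have the same root set in $\C^{\star}$, and by the second step their extreme $z_1$ coefficients agree (both being the same potential-independent function of $z_2$). For generic $z_2$ the $z_1$-discriminant is nonzero, so both polynomials are squarefree, and equality of root sets upgrades to equality of polynomials. Since this equality holds on a Zariski-dense subset of $z_2\in\C^{\star}$, it extends to all of $(\C^{\star})^2$, yielding $g\equiv 0$ and hence the lemma.

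The principal obstacle, and the reason the Newton polytope analysis is necessary, is the usual Nullstellensatz-flavored multiplicity gap: coinciding zero sets force only equality of reduced ideals, not of the defining polynomials themselves. Establishing squarefreeness of $\mathcal{P}_V(\cdot,z_2,\lambda_0)$ at generic $z_2$---equivalently, the generic simplicity of $\lambda_0$ as an eigenvalue of $\mathcal{D}_V(z)$---is what bridges this gap, and can be extracted from a generic-position argument on the finite matrix $\mathcal{D}_V(z)$ or from the irreducibility statement in Theorem~\ref{gcf1}, with separate attention to the exceptional energy $\lambda_0=[V]$ at which the Fermi variety may split into two components.
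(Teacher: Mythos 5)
First, note that the paper itself gives no proof of this lemma: it is imported verbatim from \cite[Lemma 2.3]{liu2021fermi}, so your proposal can only be measured against that cited argument. Your strategy rests on the same two pillars as the cited proof: (i) the hypothesis translates, via $z_j=e^{2\pi i k_j}$, into equality of the zero sets of the Laurent polynomials $\mathcal{P}_X(\cdot,\lambda_0)$ and $\mathcal{P}_Y(\cdot,\lambda_0)$ in $(\C^{\star})^2$, and (ii) the vertex monomials $z_1^{\pm q_2}$, $z_2^{\pm q_1}$ of the Newton polytope carry coefficients $\pm 1$ coming from a forced permutation, hence independent of the potential and of $\lambda$. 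Both points are correct (for the fibering you only need that the unique lattice point of the diamond with first coordinate $\pm q_2$ is a vertex, so the extreme $z_1$-coefficients are genuinely constant in $z_2$). The cited proof then finishes directly in two variables --- equal zero sets give equal irreducible factors up to multiplicities and a scalar, the irreducibility results of \cite{liu1} plus the fixed Newton polytope pin the multiplicities, and the potential-independent vertex coefficient kills the scalar --- whereas you fiber over $z_2$ and count roots; that is a legitimate alternative ending.

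The genuine gap is the squarefreeness step, which you correctly flag as the crux but do not close. Irreducibility of the Fermi \emph{variety} in Theorem \ref{gcf1} is a statement about the zero set, i.e.\ about the radical ideal; it does not preclude $\mathcal{P}_V(\cdot,\cdot,\lambda_0)$ from being a proper power $c\,f^{a}$ of an irreducible Laurent polynomial, in which case the $z_1$-discriminant vanishes identically for \emph{every} $z_2$ and your generic-$z_2$ argument never starts. To exclude this you need an extra input: for $\lambda_0\neq[V]$, either the polynomial-level irreducibility actually established in \cite{liu1}, or the observation that the vertices $(\pm q_2,0),(0,\pm q_1)$ of $a\,\Delta(f)$ must be $a$ times lattice points, forcing $a\mid q_1$ and $a\mid q_2$ and hence $a=1$ by the standing coprimality of $q_1,q_2$; and for the exceptional energy $\lambda_0=[V]$ with two components, the explicit identity $\mathcal{P}_V(z,[V])=\mathcal{P}_{\mathbf{0}}(z,0)$ (Remark 4 of \cite{liu1}, invoked in the proof of Theorem \ref{thm2}), whose two factors are distinct and reduced. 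A ``generic-position argument on the finite matrix'' cannot substitute here, because the obstruction is a global repeated factor, not a pointwise degeneracy. Once squarefreeness is supplied your argument does close, although at that point the two-variable comparison of factorizations reaches the conclusion more directly than the fibration.
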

	\begin{lemma}\label{lem3}
		Assume that separable functions
		$X=X_1\oplus X_2$ and $Y=Y_1\oplus Y_2$ are Fermi isospectral. Assume that  $X_2$ and $Y_2$ are Floquet isospectral. Then 
		$X_1$ and $Y_1$ are Floquet isospectral.
	\end{lemma}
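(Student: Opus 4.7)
The plan is to exploit separability to reduce the Fermi-isospectrality identity $\mathcal{P}_X(z,\lambda_0) = \mathcal{P}_Y(z,\lambda_0)$ (obtained from Lemma \ref{lem1}) to an identity involving only the one dimensional branches $\lambda_{X_1}^{l_1}, \lambda_{Y_1}^{l_1}$, and then invoke Lemma \ref{lem2} to conclude. I begin with the tensor-sum structure: because $V = V_1 \oplus V_2$, the matrix $\mathcal{D}_V(z_1^{q_1}, z_2^{q_2})$ acts as $\mathcal{D}_{V_1}(z_1^{q_1}) \otimes I + I \otimes \mathcal{D}_{V_2}(z_2^{q_2})$, so its eigenvalues are the sums $\lambda_{V_1}^{l_1}(z_1) + \lambda_{V_2}^{l_2}(z_2)$. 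Combined with the one dimensional factorization \eqref{g2}, this gives
\begin{equation*}
\mathcal{P}_V(z_1^{q_1}, z_2^{q_2}, \lambda) = \prod_{l_1=0}^{q_1-1}\prod_{l_2=0}^{q_2-1}\bigl(\lambda_{V_1}^{l_1}(z_1) + \lambda_{V_2}^{l_2}(z_2) - \lambda\bigr)
\end{equation*}
for $V \in \{X, Y\}$. Floquet isospectrality of $X_2, Y_2$ gives $\mathcal{P}_{X_2} \equiv \mathcal{P}_{Y_2}$, and the unique leading asymptotic $e^{2\pi i l_2/q_2} z_2$ in \eqref{g8} then forces the branchwise identity $\lambda_{X_2}^{l_2}(z_2) = \lambda_{Y_2}^{l_2}(z_2) =: \mu_{l_2}(z_2)$ for large $z_2$.

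Next I collapse the inner product via \eqref{g4}: for any $\lambda$,
\begin{equation*}
\prod_{l_2}(\lambda + \mu_{l_2}(z_2) - \lambda_0) = \hat{\mathcal{P}}_{X_2}(\lambda_0 - \lambda) + (-1)^{q_2+1}(z_2^{q_2}+z_2^{-q_2}).
\end{equation*}
Setting $g(\lambda) := \hat{\mathcal{P}}_{X_2}(\lambda_0 - \lambda)$ and $t := (-1)^{q_2+1}(z_2^{q_2}+z_2^{-q_2})$, Lemma \ref{lem1} rewrites as the polynomial identity in $t$
\begin{equation*}
\prod_{l_1=0}^{q_1-1}\bigl(g(\lambda_{X_1}^{l_1}(z_1)) + t\bigr) = \prod_{l_1=0}^{q_1-1}\bigl(g(\lambda_{Y_1}^{l_1}(z_1)) + t\bigr),
\end{equation*}
valid for each large $z_1$. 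Equating multisets of roots and matching labels via the asymptotic $g(\lambda_{V_1}^{l_1}(z_1)) \sim e^{2\pi i l_1 q_2 / q_1} z_1^{q_2}$ — whose $q_1$ phases are distinct because $\gcd(q_1, q_2)=1$ — upgrades this to the labeled equality $g(\lambda_{X_1}^{l_1}(z_1)) = g(\lambda_{Y_1}^{l_1}(z_1))$ for every $l_1$ and all large $z_1$. Comparing subleading $z_1^{q_2-1}$ coefficients also yields $[X_1] + [X_2] = [Y_1] + [X_2]$, hence $[X_1] = [Y_1]$.

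Suppose for contradiction that $X_1, Y_1$ are not Floquet isospectral. Since $[X_1] = [Y_1]$, Lemma \ref{lem2} applies and produces, for each $l_1$ and each small $\eta \neq 0$, a $z^* \in \{z : |z| \geq R\}$ with $\lambda_{X_1}^{l_1}(z^*) = \lambda_{Y_1}^{l_1}(z^*) + \eta$. Substituting into $g(\lambda_{X_1}^{l_1}(z^*)) = g(\lambda_{Y_1}^{l_1}(z^*))$ yields $g(\mu + \eta) = g(\mu)$ at $\mu = \lambda_{Y_1}^{l_1}(z^*)$. But $g$ is a monic polynomial of degree $q_2$, so $g(\mu+\eta) - g(\mu)$ has degree $q_2-1$ in $\mu$, and its zeros converge to those of $g'$ as $\eta \to 0$, staying inside a bounded disk depending only on $g$; meanwhile $|\mu| \gtrsim |z^*| \geq R$ can be made arbitrarily large by taking $R$ large in Lemma \ref{lem2}, yielding the contradiction. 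Thus $X_1$ and $Y_1$ are Floquet isospectral. The main obstacle is the multiset-to-labeled-equality step: converting the $t$-polynomial root-set equality into a branchwise equality requires the leading phases $\{e^{2\pi i l_1 q_2/q_1}\}_{l_1=0}^{q_1-1}$ to be pairwise distinct, which is exactly where the pairwise coprimality hypothesis on $q_1, q_2$ enters.
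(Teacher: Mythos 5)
Your argument is correct, but it takes a genuinely different route from the paper's, and a longer one. The paper keeps $z_1$ as the free variable: it factors $\mathcal{P}_X(z_1,z_2^{q_2},\lambda_0)$ over the branches of $X_2$, matches the $q_2$ factors label-by-label (treating $(-1)^{q_1+1}(z_1+z_1^{-1})$ as a polynomial variable and using the $z_2$-asymptotics of $\lambda_{X_2}^l$, which is where coprimality enters for the paper), and thereby obtains the full identity $\mathcal{P}_{X_1}(z_1,-\lambda_{X_2}^l(z_2)+\lambda_0)=\mathcal{P}_{Y_1}(z_1,-\lambda_{Y_2}^l(z_2)+\lambda_0)$, i.e.\ \eqref{gj841}. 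Once Floquet isospectrality of $X_2,Y_2$ gives $\lambda_{X_2}^l=\lambda_{Y_2}^l$, both sides are evaluated at the \emph{same} $\lambda$, which sweeps an open set as $z_2$ varies, so $\mathcal{P}_{X_1}(z_1,\cdot)\equiv\mathcal{P}_{Y_1}(z_1,\cdot)$ immediately; Lemma \ref{lem2} is never needed here (the paper saves it for the proof of Theorem \ref{thm1}, where $X_2,Y_2$ are \emph{not} Floquet isospectral). You collapse the product in the opposite direction, over the branches of $X_1$ with $t=(-1)^{q_2+1}(z_2^{q_2}+z_2^{-q_2})$ as the free variable, which only yields the composed identity $g(\lambda_{X_1}^{l_1})=g(\lambda_{Y_1}^{l_1})$ with $g=\hat{\mathcal{P}}_{X_2}(\lambda_0-\cdot)$; you then have to undo the composition with $g$, which you do via $[X_1]=[Y_1]$, Lemma \ref{lem2}, and the uniform boundedness of the roots of $g(\cdot+\eta)-g(\cdot)$. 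All of these steps check out (including the need for $\gcd(q_1,q_2)=1$ in your label-matching, and the reading of Lemma \ref{lem2} as supplying arbitrarily large $R$, which its proof supports since $\epsilon_1$ there can be taken arbitrarily small). Two remarks: your detour through Lemma \ref{lem2} can be avoided entirely, since $\lambda_{X_1}^{l_1}(z_1)-\lambda_{Y_1}^{l_1}(z_1)=O(1)$ while both branches tend to infinity, and $g(a)=g(b)$ with $a-b=O(1)$ and $|a|\to\infty$ already forces $a=b$ because $g(a)-g(b)=(a-b)\,q_2a^{q_2-1}(1+o(1))$; and your intermediate assertion ``$[X_1]+[X_2]=[Y_1]+[X_2]$'' is loosely phrased (the $[X_2]$-contribution sits inside the coefficient of $g$ and cancels), though the conclusion $[X_1]=[Y_1]$ is right.
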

	\begin{proof}
		Recall that $z=(z_1,z_2)$.
		By the assumption  that $F_{\lambda_0} (X)=F_{\lambda_0} ({Y})$ and Lemma \ref{lem1}, one has that 
		\begin{equation*} 
		\mathcal{P}_X(z,\lambda_0)=	\mathcal{P}_Y(z,\lambda_0),
		\end{equation*}
		and hence
		\begin{equation}\label{g19}
		\mathcal{P}_X(z_1,z_2^{q_2},\lambda_0)=	\mathcal{P}_Y(z_1,z_2^{q_2},\lambda_0).
		\end{equation}
		
		Using the fact that both $X$ and $Y$ are  separable, one has that   for any large $z_2$,
		\begin{align}
		\mathcal{P}_X(z_1,z_2^{q_2},\lambda_0)&= \prod_{l=0}^{q_2-1}  \mathcal{P}_{X_1} (z_1,  -\lambda_{X_2}^l(z_2)+\lambda_0)\nonumber\\
		&=	\prod_{l=0}^{q_2-1} (\hat{\mathcal{P}}_{X_1}( -\lambda_{X_2}^l(z_2)+\lambda_0)+(-1)^{q_1+1}z_1+(-1)^{q_1+1}z_1^{-1}),\label{g17}
		\end{align}
		
		and 
		\begin{align}
		\mathcal{P}_Y(z_1,z_2^{q_2},\lambda_0)&= \prod_{l=0}^{q_2-1}  \mathcal{P}_{Y_1} (z_1, - \lambda_{Y_2}^l(z_2)+\lambda_0)	\nonumber\\
		&=	\prod_{l=0}^{q_2-1} (\hat{\mathcal{P}}_{Y_1}(- \lambda_{Y_2}^l(z_2)+\lambda_0)+(-1)^{q_1+1}z_1+(-1)^{q_1+1}z_1^{-1}).\label{g18}
		\end{align}
		By \eqref{g8}, \eqref{g19}, \eqref{g17}, \eqref{g18}, and the unique factorization theorem (using $(-1)^{q_1+1}z_1+(-1)^{q_1+1}z_1^{-1}$ as a variable), one has that  for any $l=0,1,\cdots, q_2-1$, 
		\begin{equation}\label{gj84}
		\hat{\mathcal{P}}_{X_1}( -\lambda_{X_2}^l(z_2)+\lambda_0)= \hat{\mathcal{P}}_{Y_1}( -\lambda_{Y_2}^l(z_2)+\lambda_0), 
		\end{equation}
		and 
		\begin{equation}\label{gj841}
		\mathcal{P}_{X_1} (z_1,  -\lambda_{X_2}^l(z_2)+\lambda_0)= \mathcal{P}_{Y_1} (z_1, - \lambda_{Y_2}^l(z_2)+\lambda_0).
		\end{equation}
		
		Since 	$ X_2$ and $Y_2$ are Floquet isospectral, we know that  for any large $z_2$,
		\begin{equation}\label{g12}
		\lambda_{X_2}^l(z_2)=\lambda_{Y_2}^l(z_2), l=0, 1,2,\cdots, q_2-1.
		\end{equation}
		By \eqref{gj841} and   \eqref{g12}, one has that  for any $z_1\in \C^\star$ and $\lambda\in\C$,
		\begin{equation}\label{g20}
		\mathcal{P}_{X_1} (z_1, \lambda)= \mathcal{P}_{Y_1} (z_1, \lambda).
		\end{equation}
		By \eqref{g20}	and basic fact (2) appearing at the end of Section \ref{S2}, we conclude that $X_1$ and $Y_1$ are Floquet isospectral.

	\end{proof}

	\begin{proof}[\bf Proof of Theorem \ref{thm1} ]
		
		Without loss of generality, assume that $q_2>q_1$ and $ [X_2]=[Y_2]=0$.
		 If $X_2$ and $Y_2$ are Floquet isospectral, then Theorem \ref{thm1} follows from Lemma \ref{lem3}. So  we assume  $X_2$ and $Y_2$   are not  Floquet isospectral.
		
		Applying  Lemma \ref{lem2} with $V=X_2$, $\tilde{V}=Y_2$ and $q=q_2$,  there exist  $\eta$ and large $x_l$, $l=0,1,2,\cdots q_2-1$,   such that 
		
		\begin{equation}\label{g13}
		\lambda_{X_2}^l(x_l)=	\lambda_{Y_2}^l(x_l)+\eta.
		\end{equation}
		By \eqref{gj84} and \eqref{g13}, we have that  for any $ l=0,1,2,\cdots, q_2-1$,
		\begin{equation}\label{g14}
		\hat{\mathcal{	P}}_{X_1} (	-\lambda_{Y_2}^l(x_l)-\eta+\lambda_0) =\hat{	\mathcal{P}}_{Y_1} (	-\lambda_{Y_2}^l(x_l)+\lambda_0).
		\end{equation}
		Since both $ \hat{\mathcal{	P}}_{X_1} (	\lambda-\eta) $ and $\hat{\mathcal{P}}_{Y_1} (	\lambda)$ are polynomials of $\lambda$ with degree $q_1$, by \eqref{g14} and the fact that $q_2>q_1$, one has that 
		\begin{equation}\label{g26}
		\hat{\mathcal{	P}}_{X_1} (	\lambda-\eta) =	\hat{\mathcal{	P}}_{X_1} (	\lambda),\lambda\in \C.
		\end{equation}
		This implies that 
		$X_1$ and $Y_1$ are Floquet isospectral up to a constant (by letting $\eta\to 0$ in \eqref{g26}, we can indeed show that 	$X_1$ and $Y_1$ are Floquet isospectral. This is because we  have already  shifted  the constant by setting $ [X_2]=[Y_2]=0$).  Now Theorem \ref{thm1} follows from Lemma \ref{lem3} (exchange $X_1$ and $X_2$, and $Y_1$ and $Y_2$). 
	\end{proof}

	Denote by ${\bf 0} $ the zero function on $\Z^2$.
	
	\begin{proof}[\bf Proof of Theorem \ref{thm2}]
		Assume that  $F_{\lambda}(V)$ is reducible at some $\lambda=\lambda_0$.
		By  Remark 4 in \cite{liu1} (also Theorem 2.5 in \cite{liu2021fermi}),   $\lambda_0=[V]$ and 	$\mathcal{P}_V(z,\lambda_0)=\mathcal{P}_{{\bf 0}}(z,0)$. Therefore	$\mathcal{P}_{V-[V]}(z,0)=\mathcal{P}_{{\bf 0}}(z,0)$.
		By Theorem \ref{thm1}, we have that $V$  and a constant potential  are Floquet isospectral.  Therefore,  Ambarzumian-type theorem (e.g. \cite{heflmp}) implies   $V$ is constant. This contradicts the assumption.
	\end{proof}

	\section*{Acknowledgments}
	
	This research was supported by NSF DMS-2000345 and  DMS-2052572.

	\bibliographystyle{abbrv} % abbrv
	\bibliography{absence}
	
\end{document}